\newtheorem{theorem}{Theorem}
\newtheorem{corollary}{Corollary}
\newcommand{\defeq}{\triangleq}
\newcommand{\Expect}{{\rm I\kern-.3em E}}
\newtheorem{example}{{\em Example}}
\newcommand\blfootnote[1]{%
  \begingroup
  \renewcommand\thefootnote{}\footnote{#1}%
  \addtocounter{footnote}{-1}%
  \endgroup
}
\author{Eleftherios Lampiris, Jingjing Zhang, Osvaldo Simeone, Petros Elia}
\title{Does Coded Caching Suffer from Uneven Channels?}
\title{NOMA with Caches}
\title{Fundamental Limits of Wireless Caching \\ under Uneven-Capacity Channels}
\date{}
\begin{document}

\maketitle

\begin{abstract}\blfootnote{Eleftherios is with the Electrical Engineering and Computer Science Department, Technische Universit\"at Berlin, 10587 Berlin, Germany (lampiris@tu-berlin.de). Petros is with the Communication Systems Department at EURECOM, Sophia Antipolis, 06410, France (elia@eurecom.fr). Their work is supported by the ANR project ECOLOGICAL-BITS-AND-FLOPS and the European Research Council under the EU Horizon 2020 research and innovation program / ERC grant agreement no. 725929. (ERC project DUALITY). Jingjing and Osvaldo are with the Department of Informatics, King's College London, London, UK (jingjing.1.zhang@kcl.ac.uk, osvaldo.simeone@kcl.ac.uk). Their work is supported by the European Research Council (ERC) under the European Union’s Horizon 2020 Research and Innovation Programme (Grant Agreement No. 725731). This work was conducted while Eleftherios was employed by EURECOM.}
    This work identifies the fundamental limits of cache-aided coded multicasting in the presence of the well-known `worst-user' bottleneck. This stems from the presence of receiving users with uneven channel capacities, which often forces the rate of transmission of each multicasting message to be reduced to that of the slowest user. This bottleneck, which can be detrimental in general wireless broadcast settings, motivates the analysis of coded caching over a standard Single-Input-Single-Output (SISO) Broadcast Channel (BC) with $K$ cache-aided receivers, each with a generally different channel capacity. For this setting, we design a communication algorithm that is based on superposition coding that capitalizes on the realization that the user with the worst channel may not be the real bottleneck of communication. We then proceed to provide a converse that shows the algorithm to be near optimal, identifying the fundamental limits of this setting within a multiplicative factor of $4$. Interestingly, the result reveals that, even if several users are experiencing channels with reduced capacity, the system can achieve the same optimal delivery time that would be achievable if all users enjoyed maximal capacity.
\end{abstract}

\section{Introduction}

The seminal work in \cite{maddah2014fundamental} showed how adding caches to receiving nodes can substantially reduce the time required to deliver content. Specifically, reference \cite{maddah2014fundamental} studied the case in which a transmitter with access to a library of $N$ unit-sized files serves -- via a wired, single-stream, unit-capacity bottleneck link -- $K$ cache-aided receivers/users. Each user is equipped with a cache of size equal to a fraction $\gamma\in[0,1]$ of the size of the library, so that $K\gamma$ is the cumulative cache size normalized by the library size. For this setting, the authors of \cite{maddah2014fundamental} proposed a novel cache placement algorithm and a novel multicast transmission policy that delivers any set of $K$ files to the receivers with (normalized) delay at most
\begin{align}\label{eqMNperformance}
    T_{MN}=\frac{K(1-\gamma)}{K\gamma+1}
\end{align}
thus revealing a speed-up factor of $K\gamma+1$ compared to the delay $K(1-\gamma)$ corresponding to a standard scheme that serves each user in turn.

The delay \eqref{eqMNperformance} is obtained by a \emph{coded caching} approach that is based on the transmission of a sequence of multicast messages that convey information to several users at a time (even if these users requested different content), with users decoding their desired information by means of cache-aided interference cancellation. In this scheme, each multicast message consists of a XOR $X_{\sigma}$ that carries information to a subset $\sigma \subset [K]\defeq [1,2,\dots,K]$ of $|\sigma|=K\gamma+1$ users at a time.

While the promised speedup factor of $K\gamma+1$ in \eqref{eqMNperformance} is proportional to the normalized \textit{cumulative cache size of the network}, it was quickly realized that a variety of bottlenecks severely hamper this performance. These include the subpacketization bottleneck~\cite{shanmugam2016finite,lampiris2018addingJSAC,yan2017placement,tang2018coded,shangguan2018centralized,KrishnaProjective2019,MingyueSubpacketization2019}, the uneven cache sizes bottleneck \cite{ibrahim2019coded,asadi2018centralized,AmiriDecentralizedUnevenCachesTransComm,lampiris2018full}, and the bottleneck studied here that arises from uneven channel capacities between the transmitter and the users. This last bottleneck is particularly relevant in wireless scenarios with multicasting. Such networks produce ``slower'' users that can force the multicast rates to be reduced down to a level that can be decoded by these users. This can diminish the coded caching gains and could pose a serious limitation to any effort to implement cache-aided coded multicasting in wireless settings. 

\begin{example}
    Let us consider the wireless Single-Input-Single-Output (SISO) Broadcast Channel (BC) with $K$ users, each equipped with a cache of normalized size $\gamma$, and let us further assume that all users have maximal normalized unit capacity, except for one user that has a normalized link capacity equal to $\frac{1}{K}+\gamma<1$. It is easy to see that a (naive) transmission of the sequence of the XORs from \cite{maddah2014fundamental} would induce the delay 
    \begin{align}\label{doubledelay}
        T
        &=\frac{1-\gamma}{\frac{1+K\gamma}{K}}+\frac{(K-K\gamma-1)(1-\gamma)}{1+K\gamma}\\
        &=2T_{MN}-(1-\gamma)\approx 2 T_{MN}
    \end{align}
    which is approximately double the delay $T_{MN}$ in \eqref{eqMNperformance} that we would have if all users enjoyed unit normalized link capacities.  It is also worth noting that approximately the same delay $T$ in \eqref{doubledelay} would be obtained if we treated the slow user separately from the rest using time sharing. Essentially, whether with a naive or with a separated approach that excludes the slow user from coded caching, a single slow user can cause the worst-case delivery time to double, and the overall multicasting gain to be cut in half.
 \end{example}

\subsection{Related Work} The importance of the uneven-channel bottleneck in coded caching has been acknowledged in a large number of recent works that seek to understand and ameliorate this limitation \cite{zhengWirelessVideoGlobeCom2016,zhangTopologicalISIT2017,ngoScalableTransWireless2018,destounisAlphaFairWiOpt2017,7558129,8036265,8371012,8006856,8359316,lampirisNoCsitISIT,piovanoGDoFMISOBCTransIT2019,ShariatpanahiPhysicalLayer2019TransIT,tolli2017multi,tolliMulticast2018ISIT,zhao2019low,salehi2019subpacketization}. For example, reference \cite{zhengWirelessVideoGlobeCom2016} focuses on the uneven link-capacity SISO BC where each user experiences a distinct channel strength, and proposes algorithms that outperform the naive implementation of the algorithm of \cite{maddah2014fundamental} whereby each coded message is transmitted at a rate equal to the rate of the worst user whose message appears in the corresponding XOR operation. 
Under a similar setting, the work in \cite{destounisAlphaFairWiOpt2017} considered feedback-aided user selection that can maximize the sum-rate as well as increase a fairness criterion that ensures that each user receives their requested file in a timely manner. In the related context of the erasure BC where users have uneven probabilities of erasures, references \cite{7558129} and \cite{8036265} showed how an erasure at some users can be exploited as side information at the remaining users in order to increase system performance. Related work can also be found in~\cite{8371012,8006856,8359316}.

The uneven-capacity bottleneck was also studied in the presence of multiple transmit antennas \cite{ngoScalableTransWireless2018,8374074}. Reference \cite{ngoScalableTransWireless2018} exploited transmit diversity to ameliorate the impact of the worst-user capacity, and showed that employing $\mathcal{O}(\ln K)$ transmit antennas can allow for a transmission sum-rate that scales with $K$. Similarly, the work in \cite{8374074} considered multiple transmit and multiple receive antennas, and designed topology-dependent cache-placement to ameliorate the worst-user effect.

In a related line of work, the papers \cite{lampirisNoCsitISIT} and \cite{piovanoGDoFMISOBCTransIT2019} studied the cache-aided topological interference channel where $K$ cache-aided transmitters are connected to $K$ cache-aided receivers, and each transmitter is connected to one receiver via a direct ``strong'' link and to each of the other receivers via ``weak'' links. Under the assumption of no channel state information at the transmitters (CSIT), the authors showed how the lack of CSIT can be ameliorated by exploiting the topology of the channel and the multicast nature of the transmissions.

Recently, significant effort has been made toward understanding the behavior of coded caching in the finite Signal-to-Noise Ratio (SNR) regime with realistic (and thus often uneven) channel qualities. In this direction, the work in \cite{ShariatpanahiPhysicalLayer2019TransIT} showed that a single-stream coded caching message beamformed by an appropriate transmit vector can outperform some existing multi-stream coded caching methods in the low-SNR regime, while references \cite{tolli2017multi,tolliMulticast2018ISIT} (see also \cite{zhao2019low}) revealed the importance of jointly considering caching with multicast beamformer design. Moreover, the work in \cite{salehi2019subpacketization} studied the connection between rate and subpacketization in the multi-antenna environment, accounting for the unevenness naturally brought about by fading.

Our work is in the spirit of all the above papers, and it can be seen specifically as an extension of \cite{zhangTopologicalISIT2017}. This reference considered a specific binary topological case, for which it proposed a two-level superposition-based transmission scheme to alleviate the worst-user bottleneck. Further, a similar approach has been proposed in the work of \cite{amiriEnergyMinimizationJSAC2018}, where though the closely related scheme places focus on minimizing the power.

\subsection{Overview of Results} 
In this paper, we study a cache-aided SISO BC where each receiver $k$ experiences a link of some normalized capacity $\alpha_k\in[0,1]$. We establish the optimal worst-case delivery time $T(K,\gamma,\{\alpha_k\})$ within a factor of at most $4$ for any number of $K$ users, fractional cache capacity $\gamma$, and capacity set $\{\alpha_k\}$. Key to this result is a new algorithm that uses superposition coding, where (assuming without loss of generality that the users are labeled from weaker to stronger, i.e., such that $\alpha_{k}\le \alpha_{k+1}$) we split the power into $K-K\gamma-1$ layers, and in layer $k$, we transmit \textit{only} XORs whose weakest user is user $k$. While this design indeed encodes some XORs at lower rates (matching the capacity of the worst user for that message), it also allows the simultaneous transmission of other XORs in the remaining power layers.
The main result reveals that the optimal performance \eqref{eqMNperformance} achievable when $\alpha_k=1$, for all $k\in[K]\defeq [1,2,\dots,K]$, is in fact achievable even if each user $k$ has reduced link capacity such that the condition
\begin{align}
	\alpha_{k} \gtrsim 1-e^{-k\gamma},~~ \forall k\in[K]
\end{align}
is satisfied. This quantifies the intuitive fact that systems with smaller caches can be better immune to the negative effects of channel unevenness.

\section{System Model} 

We consider the $K$-user wireless SISO BC, with the transmitter having access to a library of $N$ files $\{W^n\}_{n=1}^{N}$, each of normalized unit size, and the $K$ receivers having a cache whose size is equal to a fraction $\gamma \in[0,1]$ of the library size. Communication takes place in two distinct phases, namely the pre-fetching and the delivery phases. In the first phase, the caches of the users are filled with content from the library without any knowledge of future requests or of channel capacities.  Then, during the delivery phase, each user $k$ requests\footnote{We are interested in the worse-case delivery time and thus we will assume that each user will ask for a different file.} a single file $W^{d_k}$, after which the transmitter -- with knowledge of the requests and the link capacities -- delivers the requested content. 

After transmission, at each user $k\in [K]$, the received signal takes the form 
\begin{align}
    y_{k}= \sqrt{P^{\alpha_k}}h_{k} x +z_{k},
\end{align}
where $P$ represents the transmitting power; $x\in\mathbb{C}$ is the power-normalized transmitted signal satisfying $\mathbb{E}\{|x|^{2}\}\le 1$; $h_{k}\in\mathbb{C}$ is the channel coefficient of user $k$; $z_{k}\thicksim \mathbb{C}\mathcal{N}(0,1)$ represents the Gaussian noise at user $k$; and $ \alpha_k\in(0,1]$ is such that at each user $k\in[K]$ the average SNR equals
\begin{align} \label{con}
   \mathbb{E}\{|y_{k}|^{2}\}=P^{\alpha_k}.
\end{align}
Under the simplified Generalized Degrees of Freedom (GDoF) framework of \cite{jafarGDoFTransIT2010,davoodiAlignedImageSetsTransIT2016,gholamiGeneralized2017TransIT}, condition \eqref{con} amounts to a (normalized, by a factor $\log P$) user rate of $r_{k}=\alpha_k  \in[0,1]$. Without loss of generality, $\alpha=1$ corresponds to the highest possible channel strength.  
We assume an arbitrary set of such normalized capacities $\boldsymbol{\alpha}\triangleq\{\alpha_k\}_{k=1}^{K}$ and we assume them without loss of generality to be ordered in ascending order ($\alpha_{k}\le \alpha_{k+1}$).

The objective is to design the caching and communication scheme that minimizes the worst-case delivery time $T(K,\gamma,\boldsymbol{\alpha})$ for any capacity vector $\boldsymbol{\alpha}$.

\section{Main Results}
Before presenting the main results, we remind the reader that the naive implementation of coded caching which sequentially transmits the sequence of XORs $X_{\sigma}$ to all subsets $\sigma\in[K]$ of $\ |\sigma| = K\gamma+1$ users, requires a worst-case delivery time 
\begin{align}\label{eqNaiveCodedCaching}
    T_{uc}(K,\gamma,\boldsymbol{\alpha})=\frac{1}{\binom{K}{K\gamma}} \sum_{\sigma\subseteq[K],~|\sigma|=K\gamma+1} \max_{i\in\sigma}\left\{\frac{1}{\alpha_{i}}\right\}.
\end{align}
This follows since this conventional uncoded scheme allocates, for each XOR $X_{\sigma}$, a transmission time $
        T_{\sigma}=\max_{w\in\sigma}\left\{\frac{1}{\alpha_{w}}\right\}
    $ to allow the weakest user in $\sigma$ to decode the message\footnote{This is a well known expression that has been calculated in a variety of works such as in  \cite{zhengWirelessVideoGlobeCom2016,ShariatpanahiPhysicalLayer2019TransIT}.}.

\nocite{lampirisSubpacketizationCsitSPAWC,lampiris2018lowCSIT}

We now proceed with the main result.

\begin{theorem} \label{the:ach}
    In the $K$-user SISO BC with receiver channel strengths $\big\{\alpha_{k}\big\}_{k=1}^{K}(\alpha_{k}\le\alpha_{k+1})$ and with receivers equipped with a cache of normalized size $\gamma$, the worst-case delivery time
    \begin{align}\label{eqLosslessCompletionTime}
        T_{sc}(K,\gamma,\boldsymbol{\alpha})= \max_{w\in[K]} \left\{   \frac{1}{\alpha_{w}}\cdot   \frac{\binom{K}{K\gamma+1}-\binom{K-w}{K\gamma+1}}{\binom{K}{K\gamma}}\right\}
    \end{align}
    is achievable and is within a multiplicative factor of at most $4$ from the optimal delay $T^*(K,\gamma,\boldsymbol\alpha)$.
\end{theorem}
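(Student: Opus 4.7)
The plan is to prove achievability and a matching converse (within a multiplicative factor of $4$) separately. For achievability, I will use Maddah-Ali--Niesen (MN) placement combined with a $(K-K\gamma)$-layer superposition code. Under MN placement each file is split into $\binom{K}{K\gamma}$ subfiles indexed by $K\gamma$-subsets $\tau\subseteq[K]$, with user $k$ caching $W^n_{\tau}$ iff $k\in\tau$, so that the delivery phase naturally produces the XORs $X_\sigma=\bigoplus_{k\in\sigma}W^{d_k}_{\sigma\setminus\{k\}}$, one per $(K\gamma+1)$-subset $\sigma\subseteq[K]$. I will partition these XORs into groups $\mathcal{G}_w$ indexed by $w=\min\sigma$, noting $|\mathcal{G}_w|=\binom{K-w}{K\gamma}$, and transmit $\mathcal{G}_w$ in its own superposition layer at rate $r_w$. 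Since every XOR in $\mathcal{G}_w$ is only needed by users $\{w,\ldots,K\}$ whose channels are all at least $\alpha_w$, the degraded GDoF BC admits successive interference cancellation at user $u$ that recovers layers $1,\ldots,u$ provided $\sum_{w\leq u}r_w\leq\alpha_u$ for each $u$. Setting $r_w=(|\mathcal{G}_w|/\binom{K}{K\gamma})/T$ and using the hockey-stick identity $\sum_{w'=1}^{u}\binom{K-w'}{K\gamma}=\binom{K}{K\gamma+1}-\binom{K-u}{K\gamma+1}$, the smallest $T$ meeting all constraints is precisely $T_{sc}$.

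For the converse, I will establish that $T^*\geq L_w/(2\alpha_w)$ for every $w\in[K]$, where $L_w\defeq w(1-\gamma)/(w\gamma+1)$, through two reductions followed by a standard coded-caching lower bound. First, any valid $K$-user scheme with delay $T^*$ also serves the $w$ weakest users within that same time, hence $T^*(K,\gamma,\boldsymbol{\alpha})\geq T_{\mathrm{sub}}^*(w,\gamma,\{\alpha_1,\ldots,\alpha_w\})$. Second, strengthening the weaker channels $\alpha_1,\ldots,\alpha_{w-1}$ up to $\alpha_w$ cannot increase the optimal delay, giving $T_{\mathrm{sub}}^*\geq T_{\mathrm{uniform}}^*(w,\gamma,\alpha_w)$. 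Third, I apply the factor-$2$ Yu--Maddah-Ali--Avestimehr-type converse against the uniform unit-capacity $w$-user coded caching problem and scale it by $1/\alpha_w$, which yields $T_{\mathrm{uniform}}^*\geq L_w/(2\alpha_w)$.

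The final ingredient is the combinatorial inequality $\ell_w\leq 2L_w$ uniformly in $(K,\gamma,w)$, where $\ell_w\defeq (\binom{K}{K\gamma+1}-\binom{K-w}{K\gamma+1})/\binom{K}{K\gamma}=\sum_{w'=1}^w \binom{K-w'}{K\gamma}/\binom{K}{K\gamma}$. I plan to handle this by writing each summand as $\binom{K-w'}{K\gamma}/\binom{K}{K\gamma}=\prod_{i=0}^{w'-1}(K-K\gamma-i)/(K-i)\leq(1-\gamma)^{w'}$, comparing the resulting geometric-like sum against $L_w$, and resolving the edge cases $w=1$ and $w=K-K\gamma$ directly. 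Once in hand, for each $w$ we obtain $\ell_w/\alpha_w\leq 2L_w/\alpha_w\leq 4\cdot L_w/(2\alpha_w)\leq 4T^*$, and taking the maximum over $w$ yields $T_{sc}\leq 4T^*$. The hardest step, I expect, will be the converse---in particular invoking a cache-capacity bound with a clean constant factor that survives both the restriction to arbitrary (possibly coded) placement and the passage to the uneven-channel GDoF model; the combinatorial inequality should be comparatively routine.
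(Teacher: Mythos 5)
Your proposal is correct and follows essentially the same route as the paper: Maddah-Ali--Niesen placement with superposition layers indexed by $\min\sigma$, SIC decodability via the cumulative-rate constraints $\sum_{w'\le u}r_{w'}\le\alpha_u$, and a converse that reduces to the $w$ weakest users at capacity $\alpha_w$ and invokes the factor-$2$ cache-aided lower bound. The only (minor) divergence is in extracting the factor $4$: you propose the single uniform inequality $\ell_w\le 2\,\frac{w(1-\gamma)}{w\gamma+1}$, whereas the paper splits into the cases $w\gamma<1$ (using $\ell_w\le w(1-\gamma)$, proved by induction via Pascal's rule) and $w\gamma\ge 1$ (using $\ell_w\le \frac{K(1-\gamma)}{1+K\gamma}$); your unified bound is valid and arguably cleaner, but it is the same argument in substance.
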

\begin{proof}
    The achievability part of the scheme is described as  Algorithm~\ref{algLosslessDelivery} in Section~\ref{secLosslessTransmission}, while the converse and the derivation of the gap to optimal are presented in Section~\ref{secGapGeneral}.
\end{proof}

One of the main conclusions from the above result is summarized in the following corollary. 
\begin{corollary}\label{corDegradedButOptimal}
         In the same $K$-user SISO BC with $\gamma$-sized caches and (ordered) capacities $\big\{\alpha_{k}\big\}_{k=1}^{K}$, the baseline performance 
         \begin{align}
         	T(K,\gamma,\boldsymbol{\alpha}=\mathbf{1})=T_{MN}=\frac{K(1-\gamma)}{1+K\gamma}
         \end{align} associated to the ideal case $\alpha_{k}=1 \ \forall k\in [K]$, can be achieved even if the capacities satisfy the inequalities  
     \begin{align}\label{eqChannelUnevennessStrengths}
         \alpha_{k}\ge \alpha_{th,k} \defeq 1-\frac{\binom{K-k}{K\gamma+1}}{\binom{K}{K\gamma+1}}\approx 1-e^{-k\gamma},~~ \text{for all}~ k\in[K]. 
     \end{align}
\end{corollary}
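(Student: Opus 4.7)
The plan is to recognize Corollary~\ref{corDegradedButOptimal} as a direct algebraic consequence of Theorem~\ref{the:ach}: the achievable delay $T_{sc}(K,\gamma,\boldsymbol{\alpha})$ in \eqref{eqLosslessCompletionTime} can be rewritten in a form in which $T_{MN}$ factors out and the per-user threshold $\alpha_{th,w}$ appears explicitly, after which the hypothesis $\alpha_{w}\ge \alpha_{th,w}$ collapses the maximum to at most $T_{MN}$.

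First I would apply the elementary binomial identity $\binom{K}{K\gamma+1}/\binom{K}{K\gamma} = (K-K\gamma)/(K\gamma+1) = T_{MN}$ to factor $T_{MN}$ out of \eqref{eqLosslessCompletionTime}, obtaining
\begin{align*}
T_{sc}(K,\gamma,\boldsymbol{\alpha})
&\,=\, T_{MN}\cdot \max_{w\in[K]} \frac{1}{\alpha_{w}}\!\left(1 - \frac{\binom{K-w}{K\gamma+1}}{\binom{K}{K\gamma+1}}\right) \\
&\,=\, T_{MN}\cdot \max_{w\in[K]} \frac{\alpha_{th,w}}{\alpha_{w}},
\end{align*}
where the last equality is nothing but the definition of $\alpha_{th,w}$ from \eqref{eqChannelUnevennessStrengths}. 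Under the hypothesis $\alpha_{w}\ge \alpha_{th,w}$ for every $w\in[K]$, each ratio inside the maximum is at most one, so $T_{sc}(K,\gamma,\boldsymbol{\alpha})\le T_{MN}$, meaning that the baseline delay $T_{MN}$ is achieved by the scheme of Theorem~\ref{the:ach}, as claimed.

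For the exponential approximation I would expand the ratio as the telescoping product
\begin{align*}
1-\alpha_{th,k} \,=\, \frac{\binom{K-k}{K\gamma+1}}{\binom{K}{K\gamma+1}} \,=\, \prod_{i=0}^{k-1} \frac{K-K\gamma-1-i}{K-i},
\end{align*}
bound each factor by $1-\gamma$ (which reduces to the trivial inequality $-1\le i\gamma$), and then apply $1-\gamma\le e^{-\gamma}$ to conclude that $1-\alpha_{th,k}\le e^{-k\gamma}$, i.e., $\alpha_{th,k}\ge 1-e^{-k\gamma}$. Tightness in the regime of small $\gamma$ and large $K$ follows because $(K\gamma+1)/(K-i)\approx \gamma$ drives each factor of the product close to $1-\gamma$, recovering the stated approximation $\alpha_{th,k}\approx 1-e^{-k\gamma}$.

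The only real obstacle here is notational rather than mathematical: the key step is spotting that the bracket in \eqref{eqLosslessCompletionTime} is exactly $T_{MN}\cdot \alpha_{th,w}/\alpha_{w}$. Once this identification is made, everything else is an application of standard binomial manipulations and the inequality $1-x\le e^{-x}$, and no new achievability or converse argument is needed beyond what Theorem~\ref{the:ach} already provides.
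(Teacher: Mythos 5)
Your proof is correct, and its core is the same as the paper's: the paper's entire argument is that the corollary is ``direct from Eq.~\eqref{eqLosslessCompletionTime}'', and your factorization $T_{sc}=T_{MN}\cdot\max_{w\in[K]}\left(\alpha_{th,w}/\alpha_{w}\right)$, obtained from the identity $\binom{K}{K\gamma+1}/\binom{K}{K\gamma}=\frac{K(1-\gamma)}{K\gamma+1}=T_{MN}$, is exactly the omitted computation made explicit (and indeed the hypothesis forces $\alpha_K=1$ since $\alpha_{th,K}=1$, so the maximum equals one and $T_{sc}=T_{MN}$ exactly). Where you genuinely diverge is in justifying $\alpha_{th,k}\approx 1-e^{-k\gamma}$: the paper invokes the Stirling-type approximation $\binom{n}{k}\approx(n/k)^{k}$ together with $\lim_{K\to\infty}(1-b/K)^{K}=e^{-b}$, which gives the two-sided asymptotic statement directly but only heuristically for finite $K$, whereas you expand $1-\alpha_{th,k}=\prod_{i=0}^{k-1}\frac{K-K\gamma-1-i}{K-i}$ as a telescoping product and bound each factor by $1-\gamma\le e^{-\gamma}$. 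Your route yields the rigorous one-sided inequality $\alpha_{th,k}\ge 1-e^{-k\gamma}$ valid for every finite $K$, at the price of needing the separate (and somewhat informal) tightness remark to recover the ``$\approx$''; either treatment suffices for the corollary as stated.
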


\begin{proof}
The proof is direct from Eq.~\eqref{eqLosslessCompletionTime}, after using the Sterling approximation $\binom{n}{k}\approx\left(\frac{n}{k}\right)^{k}$ and the limit 
\begin{align}
	\lim_{K\to\infty}\left(1-\frac{b}{K}\right)^{K}=e^{-b}.
\end{align}
\end{proof}
Given any user $k$, $\alpha_{th,k}=1-\frac{\binom{K-k}{K\gamma+1}}{\binom{K}{K\gamma+1}} $ provides a threshold channel capacity that allows the algorithm to achieve the baseline unit-capacity performance $T_{MN}$.

\begin{figure}[t!]
\centering
\includegraphics[width=0.85\columnwidth]{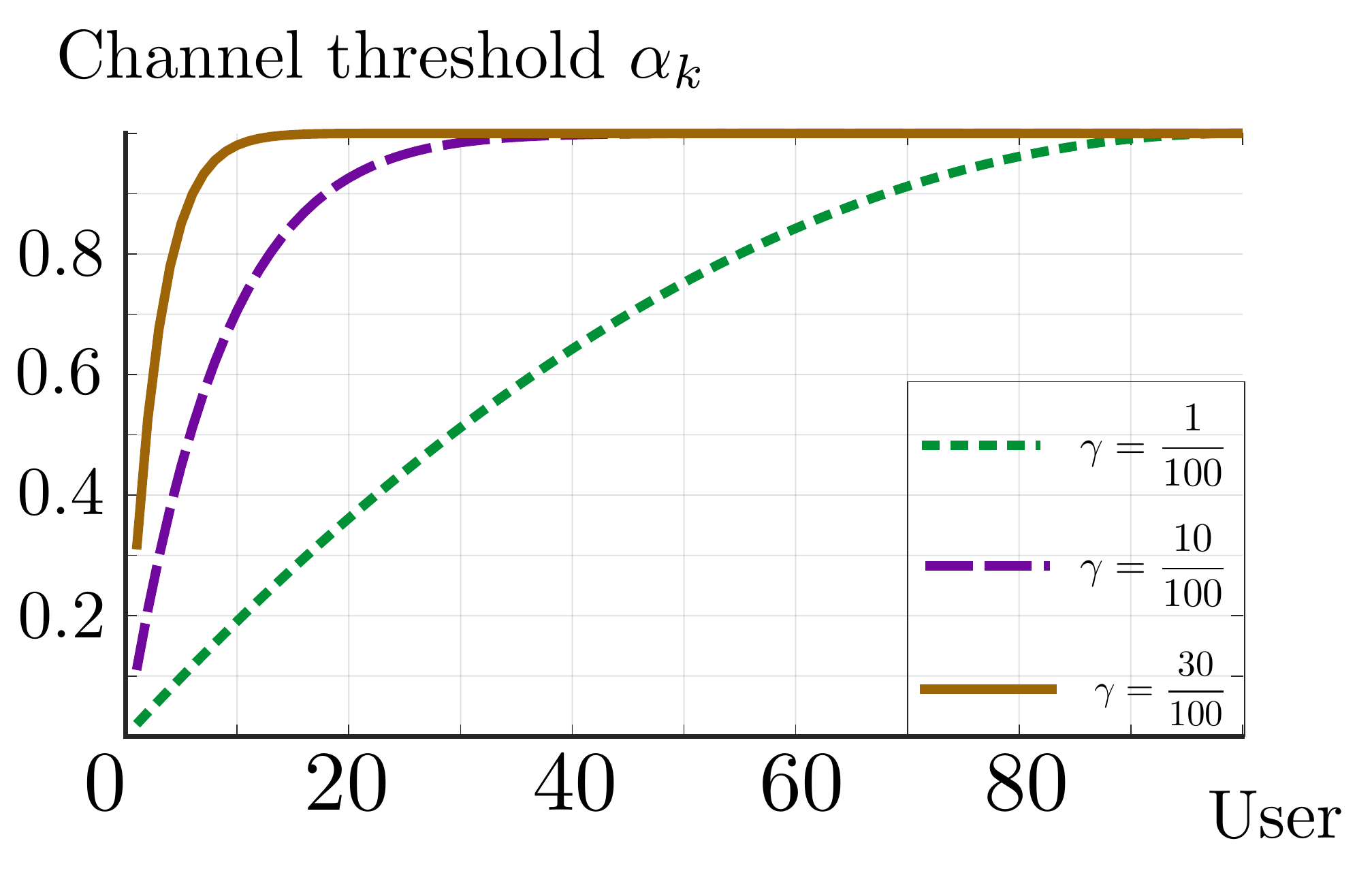}
\caption{The plot presents the threshold $\alpha_{th,k}$ for the case of $K=100$ users. We can see that as $\gamma$ decreases, an ever increasing fraction of users can have a further reduced channel capacity without any performance degradation with respect to the maximal-capacity delay.}
\label{figThreshold}
\end{figure}

\section{Placement and Delivery Algorithms}\label{secLosslessTransmission}

We here present the superposition-based communication scheme with the corresponding cache placement, transmission, and decoding that achieves the delay in Theorem~\ref{the:ach}.

\subsection{Cache Placement}

During the placement phase, we apply directly the placement algorithm of \cite{maddah2014fundamental} without exploiting any knowledge of the channel capacities. To this end, each file $W^{n},n\in[N]$, is subpacketized into $S=\binom{K}{K\gamma}$ subfiles \begin{align}
    W^{n}\to \{ W^{n}_{\tau}, \tau\subset[K],~|\tau|=K\gamma    \}
\end{align}
and the cache $\mathcal{Z}_{k}$ of user $k$ is filled as
\begin{align} \label{cache}
    \mathcal{Z}_{k}=\left\{ W^{n}_{\tau} : \tau\subset [K],\forall n\in[N]\right\}
\end{align}
which, as can easily be shown, adheres to the cache-size constraint.

\subsection{Delivery Algorithm}

After each user $k\in[K]$ requests a file $W^{d_k}$ as in \cite{maddah2014fundamental}, the transmitter delivers the $\binom{K}{K\gamma+1}$ XORs
\begin{align}
 X_{\sigma}=\bigoplus_{k\in\sigma} W^{d_{k}}_{\sigma\setminus\{k\}}
\end{align}
for all subsets $\sigma$ of users of size $|\sigma|=K\gamma+1$. To this end, in every communication slot, we split the available transmission power into $K-K\gamma-1$ ``power layers''. In power layer $k$ we encode XORs from the set 
\begin{align}
\mathcal{X}_{k}\defeq \{X_{\sigma}:\min\{\sigma\}=k\}.
\end{align}
This contains all the XORs intended for set of users $\sigma$ for which the slowest user is user $k$ i.e., all the XORs intended for user $k$ except those desired by any user whose channel is weaker than user $k$. 
It can be easily shown\footnote{The last equality follows directly from Pascal's triangle.} that the sets $\mathcal{X}_{k}$ are disjoint; that for any $k\le K-K\gamma-1$, we have 
\begin{align}
	|\mathcal{X}_{k}|=\binom{K-k+1}{K\gamma+1}-\binom{K-k}{K\gamma+1}=\binom{K-k}{K\gamma}
\end{align}
XOR messages in power layer $k$ and that the total number of XOR messages in the first $k$ power layers is 
\begin{align}
	\left| \bigcup_{m=1}^{k}  \mathcal{X}_{m} \right|=\binom{K}{K\gamma+1}-\binom{K-k}{K\gamma+1}.
\end{align}

For example, Layer $1$ (which will correspond to the highest-powered layer) contains all the XORs in set $\mathcal{X}_{1}$ i.e., all the XORs that are intended for the weakest user (user $1$). Similarly Layer $2$ will contain the XORs from  $\mathcal{X}_{2}$, i.e., those XORs that are intended for user $2$, but not for user 1, and so on. 
The power allocation for each XOR is designed so that the weakest user of the XOR can decode it, implying that any other user that needs to decode that same XOR is able to do so. The chosen power allocation seeks to minimize the overall delay.

\begin{algorithm}[th!]\caption{Delivery based on Superposition Coding}\label{algLosslessDelivery}
{
Let $\alpha_{k}\le \alpha_{k+1},~\forall k\in[K]$\\
Find $w\in[K]$ such that
\begin{align}\label{eqMaximizationFactor}
	    w\!=\!\arg\max_{k\in[K]}\left\{ \frac{\binom{K}{K\gamma+1}\!-\!\binom{K-k}{K\gamma+1}}{\alpha_{k}}\right\}.
\end{align}\\
Set $\beta_{0}=0$ and for $k\in[K-K\gamma-1]$ set
\begin{align}
    \beta_{k}=\frac{\left| \cup_{i=1}^{k} \mathcal{X}_{k}\right|}{\left| \cup_{i=1}^{w} \mathcal{X}_{k}\right|}\alpha_{w}=\frac{\binom{K}{K\gamma+1}-\binom{K-k}{K\gamma+1}}{\binom{K}{K\gamma+1}-\binom{K-w}{K\gamma+1}}\alpha_{w}.
\end{align}
\For{all $k\in[K-K\gamma-1]$}{
	Encode $x_{k}$ selected from $\mathcal{X}_{k}$ without replacement\\
	with power
\begin{align}\label{eqPowerAllocation}
    P_{k}=P^{-\beta_{k-1}}- P^{-\beta_{k}}
\end{align}\\
	and rate
\begin{align}\label{eqRateAllocation}
	r_{k}= \beta_{k}-\beta_{k-1}=\frac{\binom{K-k}{K\gamma}}{\binom{K}{K\gamma+1}-\binom{K-w}{K\gamma+1}}\alpha_{w}	.
\end{align}
}
Transmit $x_{k}, \forall k\in[K]$ simultaneously.
}
\end{algorithm}

The process is described in the form of pseudo-code in Algorithm~\ref{algLosslessDelivery}. The algorithm begins by identifying (Step $2$) the bottleneck user
\begin{align}
	    w\!=\!\arg\max_{k\in[K]}\left\{ \frac{\binom{K}{K\gamma+1}\!-\!\binom{K-k}{K\gamma+1}}{\alpha_{k}}\right\} .
\end{align}
This is defined as the user $k$ that takes the longest time to decode all power layers from 1 to $k$.
Then Step~$3$ calculates the power layer coefficients $\beta_{i}, i\in \{0,1,.., K-K\gamma-1\}$ for each power layer as explained below. In Step~$4$, for every $k\in[K-K\gamma-1]$, a new XOR is selected from set $\mathcal{X}_{k}$, and is encoded in message $x_{k}$, with power $P_{k}=P^{-\beta_{k-1}}- P^{-\beta_{k}}$ (Step~$5$) and rate $\frac{\binom{K-k}{K\gamma}}{\binom{K}{K\gamma+1}-\binom{K-w}{K\gamma+1}}\alpha_{w}$ (Step~$6$).
 Finally in Step~$7$ all the $x_{k}, \forall k\in[K]$ are transmitted simultaneously using superposition coding.

\subsection{Decoding}

In the received signal 
\begin{align}
	y_{k} =& h_{k} \sqrt{P^{\alpha_{k}}} \sum_{m_{1}=1}^{k} x_{m_{1}} +  h_{k} \sqrt{P^{\alpha_{k}}} \sum_{m_{2}=k+1}^{K-K\gamma-1} x_{m_{2}}
\end{align}
at user $k \in[ K]$, the second term $\sum_{m_{2}=k+1}^{K-K\gamma-1} x_{m_{2}}$ contains the lower power layers, which carry no valuable information for user $k$ and are treated as noise. This part of the message is transmitted with power $P^{-\beta_{k}}$, where $\beta_{k}=\frac{\binom{K}{K\gamma+1}-\binom{K-k}{K\gamma+1}}{\binom{K}{K\gamma+1}-\binom{K-w}{K\gamma+1}}\alpha_{w}$. Due to the power and rate allocation for each of these messages (cf.~Eq.~\eqref{eqPowerAllocation} and Eq.~\eqref{eqRateAllocation}),
using successive interference cancellation\footnote{In successive interference cancellation, a user first decodes the highest powered message by treating the remaining messages as noise, then proceeds to remove this -- known at this point -- message and decodes the second message by treating the remaining as noise, and so on until all messages have been decoded.} (SIC), receiver $k$ can decode the first term that encodes the messages that potentially contain information that is valuable for user $k$.

\subsection{Delay Calculation}

The total delay of the scheme is  
\begin{align}
	T_{sc}(K,\gamma,\boldsymbol{\alpha}) &= \max_{k\in[K-K\gamma-1]} \left\{ \frac{ | \mathcal{X}_{k} | }{\binom{K}{K\gamma}}\cdot \frac{1}{r_{k}} \right\}\\
	&=    \frac{1}{\alpha_{w}}\cdot   \frac{\binom{K}{K\gamma+1}-\binom{K-w}{K\gamma+1}}{\binom{K}{K\gamma}}.
\end{align}
This corresponds to the maximum delay required to deliver all XORs $X_{\sigma}\in\mathcal{X}_{k}$ across all values of $k\in[K-K\gamma-1]$.

\section{Converse and Gap to Optimality}\label{secGapGeneral}

In this section, we provide a lower bound on the optimal delay for any given set of parameters
$K,\gamma,\boldsymbol{\alpha}$, and then we prove that the achievable delay $T_{sc}\triangleq T_{sc}(K,\gamma,\boldsymbol\alpha)$ from Theorem \ref{the:ach} is within a factor of at most $4$ from the optimal delay $T^*(k,\gamma,\boldsymbol\alpha)$. 

To lower bound the minimum delay $T^*(k,\gamma,\boldsymbol\alpha)$, we consider an augmented system where the capacities of the first $w$ users, with $w$ selected as \eqref{eqMaximizationFactor}, are increased to $\alpha_{k}=\alpha_{w}\triangleq \alpha$, for all $ k\in[w]$, while the capacities of the remaining users are increased to $1$. For such a system, the delay is lower bounded as
    \begin{align}
        T_{\text{aug}}\ge\overbrace{\frac{1}{\alpha}}^{t_{1}}\overbrace{\frac{1}{2}\frac{w(1-\gamma)}{1+w \gamma}}^{t_{2}},
    \end{align}
    where term $t_{1}$ corresponds to the channel capacity of the first $w$ users, while term $t_{2}$ corresponds to a lower bound on the minimum possible worst-case delivery time\footnote{In fact, as we know from \cite{yuFactorOf2TransIT2019}, this factor is slightly smaller than $\frac{1}{2}$.} associated to a system with $w$ cache-aided users (cf.~\cite{yuFactorOf2TransIT2019}).
    
    To bound the ratio $T_{sc}/T_{\text{aug}}$, we first consider the case of $w\gamma<1$ for which we have the inequalities
\begin{align}
    \frac{T_{sc}}{T_{\text{aug}}}&\le\frac{\frac{\binom{K}{K\gamma+1}-\binom{K-w}{K\gamma+1}}{\binom{K}{K\gamma}}}{\frac{1}{2}\frac{w(1-\gamma)}{(1+w\gamma)}}\le
    \frac{w(1-\gamma)}{\frac{1}{2}\frac{w(1-\gamma)}{(1+w\gamma)}}\le 4 ,\label{appBound1}
\end{align}
where we used the inequality $\frac{\binom{K}{K\gamma+1}-\binom{K-w}{K\gamma+1}}{\binom{K}{K\gamma}}\le w(1-\gamma)$ which we prove in Appendix~\ref{appBinomialsDifference}.

When $w\gamma\ge1$, the bound -- after a few basic algebraic manipulations -- takes the form
\begin{align}
     \frac{T_{sc}}{T_e}&=\frac{\frac{\binom{K}{K\gamma+1}-\binom{K-w}{K\gamma+1}}{\binom{K}{K\gamma}}}{\frac{1}{2}\frac{w(1-\gamma)}{(1+w\gamma)}}\le \frac{\frac{\binom{K}{K\gamma+1}}{\binom{K}{K\gamma}}}{\frac{1}{2}\frac{w(1-\gamma)}{(1+w\gamma)}}\\
    &=\frac{\frac{K(1-\gamma)}{1+K\gamma}}{\frac{1}{2}\frac{w(1-\gamma)}{1+w\gamma}}=2\frac{K(1+w\gamma)}{w(1+K\gamma)}=2+2\frac{K-w}{w\!+\!Kw\gamma}\\
    &<2\left(1+\frac{K+w}{w+wK\gamma}\right)<2\left(1+\frac{K}{wK\gamma}\right)\le 4, \label{app:3}
\end{align}
which concludes the proof.

\section{Conclusions and Ramifications}
In this work, we studied a cache-aided SISO BC in which users have different channel capacities. This model is motivated by the well-known worst-user bottleneck of coded caching, which, when left untreated, can severely deteriorate coded caching gains. The new algorithm establishes, together with the converse, the fundamental limits of performance within a factor of $4$, revealing that it is in fact possible to achieve the full-capacity performance even in the presence of many users with degraded link strengths. 

Pivotal to our approach is the identification of a `bottleneck (threshold) user', which may not necessarily be the user with the worst channel. From an operational point of view, this reveals that to increase performance, we must not necessarily focus on enhancing only the weakest users, but rather should focus on altering this bottleneck threshold. 

\appendices

\section{Bound on the difference of Binomials}\label{appBinomialsDifference}
Our aim is to prove the following corollary. 
\begin{corollary}
    For every integer $m>0$, the following inequality holds
    \begin{align}\label{eqAppendix2Inequality}
    \frac{\binom{K}{K\gamma+1}-\binom{K-m}{K\gamma+1}}{\binom{K}{K\gamma}}\le m(1-\gamma).
\end{align}
\end{corollary}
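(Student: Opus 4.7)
The plan is to rewrite the numerator $\binom{K}{K\gamma+1}-\binom{K-m}{K\gamma+1}$ as a telescoping sum and then compare each summand with $\binom{K}{K\gamma}$. Writing $t\triangleq K\gamma$ for brevity, I would first apply Pascal's identity $\binom{n}{t+1}-\binom{n-1}{t+1}=\binom{n-1}{t}$ repeatedly to obtain the telescoping identity
\begin{align}
\binom{K}{t+1}-\binom{K-m}{t+1}=\sum_{i=0}^{m-1}\binom{K-i-1}{t}.
\end{align}
This reduces the claim \eqref{eqAppendix2Inequality} to showing that
\begin{align}
\sum_{i=0}^{m-1}\frac{\binom{K-i-1}{t}}{\binom{K}{t}}\le m(1-\gamma).
\end{align}

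Next I would show the term-wise bound $\binom{K-i-1}{t}/\binom{K}{t}\le 1-\gamma$ for every $i\ge 0$. To do so, I would expand the ratio as the product
\begin{align}
\frac{\binom{K-i-1}{t}}{\binom{K}{t}}=\prod_{j=0}^{i}\frac{K-t-j}{K-j},
\end{align}
and observe that each factor $\frac{K-t-j}{K-j}$ is bounded above by $\frac{K-t}{K}=1-\gamma$ (since for $0\le a<b$, the map $j\mapsto\frac{a-j}{b-j}$ is decreasing). Hence the whole product is at most $(1-\gamma)^{i+1}\le 1-\gamma$.

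Summing this term-wise bound over $i=0,\dots,m-1$ yields the desired inequality. The argument is elementary; the only mild subtlety is verifying the monotonicity of $\frac{K-t-j}{K-j}$ in $j$, which follows from the sign of the cross-difference $(K-t-j-1)(K-j)-(K-t-j)(K-j-1)=-t\le 0$. No other obstacles are anticipated.
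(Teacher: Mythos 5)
Your proof is correct and is essentially the same as the paper's: both rest on Pascal's identity together with the term-wise bound $\binom{K-i-1}{K\gamma}/\binom{K}{K\gamma}\le 1-\gamma$, with your telescoping sum being just the unrolled form of the paper's induction on $m$. Your explicit product expansion $\prod_{j=0}^{i}\frac{K-K\gamma-j}{K-j}$ is a slightly more detailed justification of that term-wise bound than the monotonicity argument the paper gives, but the route is the same.
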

\begin{proof}
We first note that 
\begin{align}\label{eqSmallInequality}
    \frac{\binom{K-n-1}{K\gamma}}{\binom{K}{K\gamma}}\le (1-\gamma),~~n\ge 0
\end{align}
holds, because for any $m>p$, we have that $\binom{K-m}{K\gamma}< \binom{K-p}{K\gamma}$, which yields that Eq.~\eqref{eqSmallInequality} holds for any $n\ge 0$.

With this in place, in order to prove the inequality in Eq.~\eqref{eqAppendix2Inequality}, we employ proof by induction. Toward this, we first see that Eq.~\eqref{eqAppendix2Inequality} holds for $m=1$ because
\begin{align}
    \frac{\binom{K}{K\gamma+1}-\binom{K-1}{K\gamma+1}}{\binom{K}{K\gamma}}&=\frac{\binom{K}{K\gamma+1}-\frac{K-K\gamma-1}{K}\binom{K}{K-K\gamma-1}}{\binom{K}{K\gamma}}\\
    &=\frac{\binom{K}{K\gamma+1}}{\binom{K}{K\gamma}}\frac{K\gamma+1}{K}=(1-\gamma).
\end{align}
Now we assume that Eq.~\eqref{eqAppendix2Inequality} holds for some $n\ge 1$, and to prove that it also holds for $n+1$, we see that
\begin{align}\label{eqAppendixBStep1}
    \frac{\binom{K}{K\gamma+1}-\binom{K-n-1}{K\gamma+1}}{\binom{K}{K\gamma}}&=\frac{\binom{K}{K\gamma+1}-\binom{K-n}{K\gamma+1}+\binom{K-n-1}{K\gamma}}{\binom{K}{K\gamma}}\\ \label{eqAppendixBStep2}
    &=\frac{\binom{K}{K\gamma+1}-\binom{K-n}{K\gamma+1}}{\binom{K}{K\gamma}}\!+\!\frac{\binom{K-n-1}{K\gamma}}{\binom{K}{K\gamma}}\\ \label{eqAppendixBStep3}
    &\le n(1-\gamma)+(1-\gamma)
\end{align}
where in Eq.~\eqref{eqAppendixBStep1} we used the equality from Pascal's triangle, and then in Eq.~\eqref{eqAppendixBStep3} we used the inequality of Eq.~\eqref{eqAppendix2Inequality}. This concludes the proof.
\end{proof}

\enlargethispage{-1.2cm}

\end{document}